\crefname{algocf}{alg.}{algs.}
\Crefname{algocf}{Algorithm}{Algorithms}
\theoremstyle{definition}
\newcommand{\mc}[1]{\ensuremath{\mathcal{#1}}}
\newcommand{\msf}[1]{\ensuremath{\mathsf{#1}}}
\newcommand{\mb}[1]{\ensuremath{\mathbb{#1}}}
\newcommand{\ra}{\rightarrow}
\newcommand{\N}{\mb{N}}
\begin{document}

\title{For-Each Operations in Collaborative Apps}

\author{Matthew Weidner}
\email{maweidne@andrew.cmu.edu}
\orcid{0000-0003-0701-7676}

\affiliation{
    \institution{Carnegie Mellon University}
    \streetaddress{5000 Forbes Ave}
    \city{Pittsburgh}
    \state{Pennsylvania}
    \country{USA}
    \postcode{15213}
}

\author{Ria Pradeep}
\email{rpradeep@alumni.cmu.edu}

\affiliation{
    \institution{Carnegie Mellon University}
    \streetaddress{5000 Forbes Ave}
    \city{Pittsburgh}
    \state{Pennsylvania}
    \country{USA}
    \postcode{15213}
}

\author{Benito Geordie}
\email{bg31@rice.edu}
\orcid{0000-0002-4021-0016}
\affiliation{
    \institution{Rice University}
    \streetaddress{6100 Main St}
    \city{Houston}
    \state{Texas}
    \country{USA}
    \postcode{77005}
}

\author{Heather Miller}
\email{heather.miller@cs.cmu.edu}
\orcid{0000-0002-2059-5406}

\affiliation{
    \institution{Carnegie Mellon University}
    \streetaddress{5000 Forbes Ave}
    \city{Pittsburgh}
    \state{Pennsylvania}
    \country{USA}
    \postcode{15213}
}


\begin{abstract}
Conflict-free Replicated Data Types (CRDTs) allow collaborative access to an app's data. We describe a novel CRDT operation, for-each on the list of CRDTs, and demonstrate its use in collaborative apps. Our for-each operation applies a given mutation to each element of a list, including elements inserted concurrently. This often preserves user intention in a way that would otherwise require custom CRDT algorithms. We give example applications of our for-each operation to collaborative rich-text, recipe, and slideshow editors.
\end{abstract}

\begin{CCSXML}
<ccs2012>
   <concept>
       <concept_id>10003752.10003809.10010172</concept_id>
       <concept_desc>Theory of computation~Distributed algorithms</concept_desc>
       <concept_significance>500</concept_significance>
       </concept>
   <concept>
       <concept_id>10003120.10003130.10003233</concept_id>
       <concept_desc>Human-centered computing~Collaborative and social computing systems and tools</concept_desc>
       <concept_significance>500</concept_significance>
       </concept>
 </ccs2012>
\end{CCSXML}

\ccsdesc[500]{Theory of computation~Distributed algorithms}
\ccsdesc[500]{Human-centered computing~Collaborative and social computing systems and tools}

\keywords{collaboration, CRDTs, concurrency}


\maketitle

\section{Introduction}

Lists of mutable values are common in collaborative apps. Examples include the list of slides in a slideshow editor, or the list of rich characters (characters plus formatting attributes) in a rich-text editor.

To allow collaborative access to this data, we would like to use a Conflict-free Replicated Data Type (CRDT) \cite{Shapiro:2011, crdt_summary_2018}. Let us assume that we already have a CRDT $\mc{C}$ representing the list's mutable value type. Then one can construct a \emph{list of $\mc{C}$s} CRDT representing the entire list.

The basic operations on a list of $\mc{C}$s allow users to insert, delete, and apply $\mc{C}$ operations to individual list elements. However, many user operations instead take the form of a ``for-each'' loop: for each element of the list meeting some condition, apply a $\mc{C}$ operation to that element. For example:
\begin{itemize}
    \item In a slideshow editor, a user selects slides 3--7 and changes the background color to blue. This does: for each slide, if its index is in the range $[3, 7]$, then set its "background color" property to "blue".
    \item In a rich-text editor, a user selects all and clicks the "bold" formatting button. This does: for each rich character, set its "bold" formatting attribute to "true".
\end{itemize}

The easy way to implement such for-each operations is using a literal for-each loop on the initiating user's replica. That is, the user's device loops through its replica of the list and performs CRDT operations on individual elements. We call this a \emph{$\msf{forEachPrior}$} operation, since it acts on elements that were inserted (causally) prior to the for-each operation.

However, $\msf{forEachPrior}$ operations do not always capture user intention. \Cref{fig:rich_text_middle} shows a classic example: in a rich text editor, if one user bolds a range of text, while concurrently, another user types in the middle of the range, then the latter text should also be bolded. A literal for-each loop on the first user's replica will not do so because it is not aware of concurrently-inserted characters.

\begin{figure}
    \centering
    \includegraphics[width=0.47\textwidth]{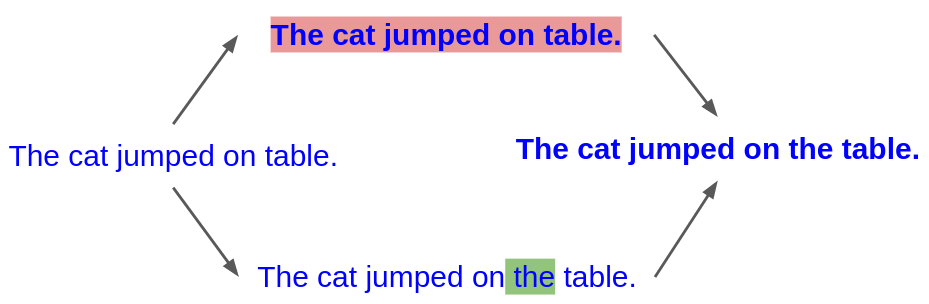}
    \caption{Typical user intention for a bold-range operation (top) concurrent to text insertion (bottom). When using a $\msf{forEachPrior}$ operation, " the" would not be bolded.}
    \label{fig:rich_text_middle}
\end{figure}

Traditionally, collaborative rich-text editors accomplish \Cref{fig:rich_text_middle}'s user intention using a specialized tree structure (e.g.\ Ignet et~al.\ \cite{ignet_rich_text}) or formatting markers at both ends of the range (e.g.\ Peritext \cite{peritext}). However, both techniques require a careful analysis of operations' interactions, and they do not generalize beyond rich-text editing.


\subsection{Contributions}
In this paper, we propose a novel for-each operation on lists of CRDTs. Unlike $\msf{forEachPrior}$, it applies a given mutation to every list element that is inserted prior \emph{or concurrently} to the for-each operation. We call this operation \emph{for-each} (without qualification), to distinguish it from $\msf{forEachPrior}$.

\Cref{fig:regions} illustrates how an element's insert operation may relate to a for-each operation: causally prior, concurrent, or causally future. Our for-each operation affects the ``prior'' and ``concurrent'' categories, while $\msf{forEachPrior}$ only affects the ``prior'' category.

\begin{figure}
    \centering
    \includegraphics[width=0.4\textwidth]{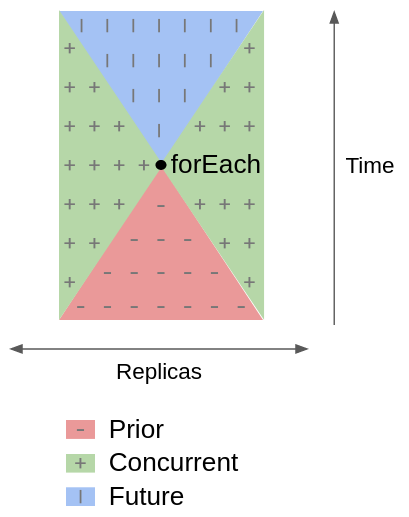}
    \caption{Light cone diagram for a $\msf{forEach}$ operation.}
    \label{fig:regions}
\end{figure}

Using our for-each operation, we easily implement \Cref{fig:rich_text_middle}'s intended behavior: issue a for-each operation with the mutation ``if the character is in the range, set its "bold" formatting attribute to "true"''. See \Cref{sec:ex_rich_text} for details.

We hope that the intended semantics (i.e., user-visible behavior) of the list of $\mc{C}$s and our for-each operation are already clear. However, some technicalities arise, especially when applying for-each to concurrently-inserted elements. Sections \ref{sec:list_of_crdts} and \ref{sec:for_each} discuss these technical details, including algorithms for all of our constructions.

A hurried reader may skip directly to \Cref{sec:examples}, which applies for-each operations to example collaborative apps.

\subsection{Background}
We assume familiarity with the causal order on CRDT operations \cite{causal_order}. The terms ``(causally) prior'', ``concurrent'', and ``(causally) future'' reference this order. Our algorithms use vector clocks \cite{fidge, mattern} to query the causal order relationship between CRDT operations.

Throughout the paper, we use the language of operation-based CRDTs \cite{Shapiro:2011}, although our constructions can easily be reformulated as state-based CRDTs. Each CRDT operation is described in terms of a \emph{generator} and an \emph{effector}. The generator is called to handle user input on the user's local replica, and it returns a \emph{message} to be broadcast to other replicas. Each replica, including the sender, applies the operation by passing this message to the corresponding effector; the sender does so atomically with the generator call. We assume that messages are received exactly once on each replica, and in causal order.


\section{List of CRDTs}
\label{sec:list_of_crdts}
We begin with a formal description of the \emph{list of CRDTs}. It is modeled on Yjs's \texttt{Y.Array} shared type \cite{yjs}.

First, a \emph{list CRDT} is a classic CRDT type whose external interface is a list (ordered sequence) of immutable values, e.g., the characters in a text document \cite{Shapiro:2011, attiya}. Since the same value may appear multiple times in a list, we use \emph{element} to refer to a unique instance of a value. A list CRDT has operations to insert and delete elements.

A \emph{list of CRDTs} is a more general CRDT in which the list values are themselves mutable CRDTs. Specifically, let $\mc{C}$ be an operation-based CRDT. The external interface of a \emph{list of $\mc{C}$s} is a list of mutable values of type $\mc{C}$. The operations on the list are:
\begin{itemize}
    \item $\msf{insert}(i, \sigma)$: Inserts a new element with initial value $\sigma$---a state of $\mc{C}$---into the list at index $i$, between the existing elements at indices $i-1$ and $i$. All later elements (index $\ge i$) shift to an incremented index.
    \item $\msf{delete}(i)$: Deletes the element at index $i$. All later elements (index $\ge i+1$) shift to a decremented index.
    \item $\msf{apply}(i, o)$: Applies a $\mc{C}$ operation $o$ to the element at index $i$. All replicas update their copy of the element's value (a state of $\mc{C}$) in the usual way for $\mc{C}$ operations. A concurrent delete operation may cause a replica to receive the apply message after deleting the element; in this case, it ignores the apply message (the delete ``wins'').
\end{itemize}

\begin{example}
\label{ex:rich_text_list}
As a running example, consider a collaborative rich-text document, such as a Google Doc. We can represent one character using a \emph{rich character CRDT}. Its state is a pair $(\mathit{char}, \mathit{attrs})$, where $\textit{char}$ is an immutable character and $\mathit{attrs}$ is a map CRDT \cite{Shapiro:2011} for formatting attributes. Then a \emph{list of rich character CRDTs} models the entire rich-text document's state. E.g., the text ``a\textbf{b}'' is represented as
\begin{align*}
[&\{\mathit{char}: \text{\texttt{"a"}}, \mathit{attrs}: \{\}\},\\
&\{\mathit{char}: \text{\texttt{"b"}}, \mathit{attrs}: \{\text{\texttt{"bold"}}, \msf{true}\}\}]
\end{align*}
\end{example}

We construct the list of $\mc{C}$s using $\mc{C}$ and an ordinary list CRDT $\mc{L}$. See \Cref{alg:list_of_crdts} for pseudocode.

Specifically, we assume that $\mc{L}$ produces \emph{positions} that are unique, immutable, and drawn from a dense total order $<$, e.g., Logoot's ``position identifiers'' \cite{logoot}.\footnote{If $\mc{L}$ uses extra state (e.g., tombstones) or messages to manage its positions, then those are implicitly added to the state or messages for the list of $\mc{C}$s.} Then the list of $\mc{C}$s is implemented as:
\begin{description}
    \item[State] A list of elements $(p, \sigma)$, where $p$ is a position from $\mc{L}$ and $\sigma$ is a state of $\mc{C}$, sorted by $p$. An application using the list usually only looks at the values $\sigma$, but it may also use the positions, e.g., for cursor locations.
    \item[Insert, delete] Similar to $\mc{L}$.
    \item[Apply] Similar to $\mc{C}$, except that the message sent to remote replicas is tagged with the element's position $p$. In the pseudocode, we use $\mc{C}.\msf{gen}(o, \sigma)$ to represent $\mc{C}$'s generator for an operation $o$, and we use $\mc{C}.\msf{eff}(m, \sigma)$ to represent $\mc{C}$'s effector for a message $m$.
\end{description}

\begin{algorithm}[h]
    \cstate{
        $\mathit{elts}$: A list of elements $(p, \sigma)$, where $p$ is a position from $\mc{L}$ and $\sigma$ is a state of $\mc{C}$, sorted by $p$ \;
    }
    \BlankLine
    
    \query({$\msf{elements}()$}){
        \Return{$elts$}
    }
    \BlankLine
    
    \update({$\msf{insert}$}){
        \generator({$(i, \sigma)$}){
            $p \gets$ new $\mc{L}$ position between the positions at indices $i-1$ and $i$ in $\mathit{elts}$ \;
            \Return{$(\msf{insert}, p, \sigma)$}
        }
        \BlankLine

        \effector({$(\msf{insert}, p, \sigma)$}){
            Insert $(p, \sigma)$ into $\mathit{elts}$
        }
    }
    \BlankLine

    \update({$\msf{delete}$}){
        \generator({$(i)$}){
            $\mathit{elt} \gets$ $i$-th element in $elts$ \;
            \Return{$(\msf{delete}, \mathit{elt}.p)$}
        }
        \BlankLine

        \effector({$(\msf{delete}, p)$}){
            $\mathit{elt} \gets$ unique element of $\mathit{elts}$ s.t.\ $\mathit{elt}.p = p$, or $\msf{null}$ if none exists (already deleted) \;
            \If{$\mathit{elt} \neq \msf{null}$}{
                Delete $\mathit{elt}$ from $\mathit{elts}$
            }
        }
    }
    \BlankLine

    \update({$\msf{apply}$}){
        \generator({$(i, o)$}){
            $\mathit{elt} \gets$ $i$-th element in $elts$ \;
            $m \gets \mc{C}.\msf{gen}(o, \mathit{elt}.\sigma)$ \;
            \Return{$(\msf{apply}, \mathit{elt}.p, m)$}
        }
        \BlankLine

        \effector({$(\msf{apply}, p, m)$}){
            $\mathit{elt} \gets$ unique element of $\mathit{elts}$ s.t.\ $\mathit{elt}.p = p$, or $\msf{null}$ if none exists (already deleted) \;
            \If{$\mathit{elt} \neq \msf{null}$}{
                $\mathit{elt}.\sigma \gets \mc{C}.\msf{eff}(m, \mathit{elt}.\sigma)$
            }
        }
    }
    \BlankLine
\caption{List of $\mc{C}$s as an operation-based CRDT.}
\label{alg:list_of_crdts}
\end{algorithm}

We sketch a proof of strong eventual consistency in \Cref{sec:correctness} (\Cref{thm:alg1}).


\section{For-Each Operation}
\label{sec:for_each}
We now define our new CRDT operation, \emph{for-each}, on the list of $\mc{C}$s.

Let $O$ denote the set of all $\mc{C}$ operations. For technical reasons (described in \Cref{sec:pure} below), we restrict for-each to pure operations, where an operation is \emph{pure} if its generated message is just the operation itself. Formally:
\begin{definition}
Let $\mc{C}.\msf{gen}(o, \sigma)$ denote $\mc{C}$'s generator. An operation $o \in O$ is \emph{pure} if $\mc{C}.\msf{gen}(o, \sigma) = o$ for all states $\sigma$.
\end{definition}
Baquero et~al.\ \cite{pure_crdts} show that many classic CRDTs' operations are pure, at least with the relaxations discussed in \Cref{sec:pure}.

Let $O_P \subset O$ denote the subset of pure $\mc{C}$ operations. Let
\[
f: (p, \mathit{prior}) \ra O_P \cup \{\msf{del}, \msf{null}\}
\]
be a function that takes as input a list element's position $p$ and a boolean $\mathit{prior}$ described below, and returns one of:
\begin{itemize}
    \item $o \in O_P$: a pure $\mc{C}$ operation to apply to the element.
    \item $\msf{del}$: an instruction to delete the element.
    \item $\msf{null}$: an instruction to do nothing.
\end{itemize}

Then the operation $\msf{forEach}(f)$ loops over $\mathit{elts}$, applies $f$ to each element, then performs the operation specified by $f$. Specifically, it loops over all elements that are inserted causally prior or concurrently to the for-each operation itself, but not causally future elements. It also computes the argument $\mathit{prior}$ for $f$, which indicates whether each element is causally prior ($\msf{true}$) or concurrent ($\msf{false}$).

\begin{example}
\label{ex:rich_text_format}
In a rich text document, a user bolds a range of text. Let $\mathit{start}$ and $\mathit{end}$ be the positions of the first and last-plus-1 characters in the range, so that the range is $[\mathit{start}, \mathit{end})$. Define:

\kwfunction({$f(p, \mathit{prior})$}){
    \If{$\mathit{start} \le p < \mathit{end}$}{
        \Return{$(\mathit{rich} \mapsto \mathit{rich}.\mathit{attrs}.\msf{set}(\texttt{"bold"}, \msf{true}))$}
    }
    \lElse{\Return $\msf{null}$}
}
\BlankLine
\noindent
Then $\msf{forEach}(f)$ implements the intended behavior in \Cref{fig:rich_text_middle}: all characters in the range are bolded, including those inserted concurrently.
\end{example}

\begin{figure}
    \centering
    \includegraphics[width=0.45\textwidth]{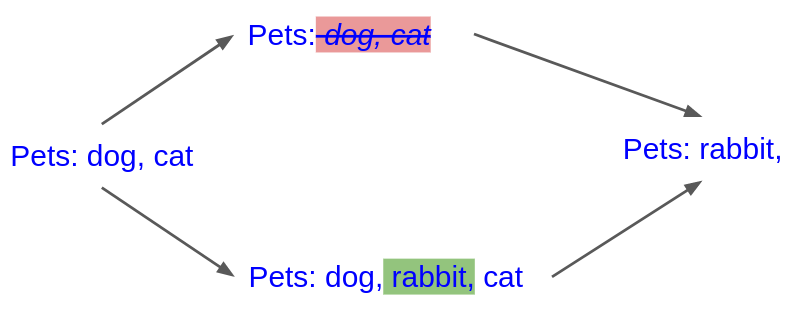}
    \caption{Typical user intention for a delete-range operation (top) concurrent to text insertion (bottom): the concurrent text is not deleted, to avoid data loss.}
    \label{fig:range_delete}
\end{figure}

\begin{example}
\label{ex:rich_text_delete}
Again in a rich-text document, a user deletes a range of text $[\mathit{start}, \mathit{end})$. To delete only existing characters, we consult $\mathit{prior}$:

\kwfunction({$f(p, \mathit{prior})$}){
    \If{$\mathit{prior}$ \textbf{and} $\mathit{start} \le p < \mathit{end}$}{
        \Return{$\msf{del}$}
    }
    \lElse{\Return $\msf{null}$}
}
\BlankLine
\noindent
Then $\msf{forEach}(f)$ implements the behavior shown in \Cref{fig:range_delete}. Note that we could instead use a $\msf{forEachPrior}$ operation, i.e., an ordinary loop on the initiating replica. However, $\msf{forEach}(f)$ generates less network traffic: a single $\msf{forEach}$ message for the entire range, instead of a separate $\msf{delete}$ message per deleted character.
\end{example}

\Cref{alg:for_each} gives a pseudocode implementation of for-each, which we now describe.

We first modify the list of $\mc{C}$s to track each element's logical insertion time $t$---namely, its sender's vector clock entry.\footnote{Some authors call this a \emph{causal dot}.} We also add a list $\mathit{buffer}$ to the internal state.

When a user calls $\msf{forEach}(f)$, their replica broadcasts $f$ together with the operation's vector clock $w$.

Upon receiving this message, a replica first loops over its current elements. For each element $\mathit{elt}$, the replica computes $f(\mathit{elt}.p, \mathit{prior})$ and does as instructed, but only locally; it does not broadcast any new messages. Here $\mathit{prior}$ indicates whether $\mathit{elt}$ was inserted causally prior to $\msf{forEach}(f)$. Note that we do not apply $f$ to elements that were already deleted on this replica, including by concurrent delete operations (the delete ``wins'').

Next, the receiving replica stores the message in its $\mathit{buffer}$. In the future, whenever the replica receives an insert message, it checks whether the insert operation is concurrent to $\msf{forEach}(f)$. If so, the replica computes $f(\mathit{elt}.p, \msf{false})$ and does as instructed, again only locally. Note that the message stays in the buffer forever, although in principle it could be discarded once all concurrent operations are received (i.e., it is causally stable \cite{pure_crdts}).

\begin{algorithm}[h]
    \cstate{
        $\mathit{elts}$: A list of elements $(p, \sigma, t)$, where $p$ is a position from $\mc{L}$, $\sigma$ is a state of $\mc{C}$, and $t = (\mathit{senderID}, \mathit{clock})$ is a vector clock entry; sorted by $p$ \;
        $\mathit{buffer}$: A list of pairs $(f, u)$, where $f$ is the $\msf{forEach}$ argument and $u$ is a vector clock entry \;
        $\mathit{vc}$: the local vector clock, in the form of a function from replica IDs to $\N$; initially the all-0 function \;
        $\mathit{replicaID}$: the unique ID of this replica
    }
    \BlankLine

    \kwfunction({$\msf{execute}(f, \mathit{elt}, \mathit{prior})$}){
        $\mathit{op} \gets f(\mathit{elt}.p, \mathit{prior})$ \;
        \If{$\mathit{op} \in O$}{
            $\mathit{elt}.\sigma \gets \mc{C}.\msf{eff}(\mathit{op}, \mathit{elt}.\sigma)$\label{line:effect}
        } \ElseIf{$\mathit{op} = \msf{del}$}{
            Delete $\mathit{elt}$ from $\mathit{elts}$
        }
    }
    \BlankLine
    
    \update({$\msf{insert}$}){
        \generator({$(i, \sigma)$}){
            $p \gets$ new $\mc{L}$ position between the positions at indices $i-1$ and $i$ in $\mathit{elts}$ \;
            $v \gets$ copy of $\mathit{vc}$ \;
            $v[\mathit{replicaID}] \gets v[\mathit{replicaID}] + 1$ \;
            \Return{$(\msf{insert}, p, \sigma, v, \mathit{replicaID})$}
        }
        \BlankLine

        \effector({$(\msf{insert}, p, \sigma, v, \mathit{senderID})$}){
            $\mathit{vc}[\mathit{senderID}] = v[\mathit{senderID}]$ \;
            $t \gets (\mathit{senderID}, v[\mathit{senderID}])$ \;
            Insert $(p, \sigma, t)$ into $\mathit{elts}$ \;
            // Loop over concurrent for-each operations. \;
            \For{$(f, u)$ \textbf{in} $\mathit{buffer}$}{\label{line:buffer_loop}
                $\mathit{concurrent} \gets (v[u.\mathit{senderID}] < u.\mathit{clock})$ \;
                \lIf{$\mathit{concurrent}$}{
                    $\msf{execute}(f, \mathit{elt}, \msf{false})$
                }
            }
        }
    }
    
    \update({$\msf{forEach}$}){
        \generator({$(f)$}){
            $w \gets$ copy of $\mathit{vc}$ \;
            $w[\mathit{replicaID}] \gets w[\mathit{replicaID}] + 1$ \;
            \Return{$(\msf{forEach}, f, w, \mathit{replicaID})$}
        }
        \BlankLine

        \effector({$(\msf{forEach}, f, w, \mathit{senderID})$}){
            $\mathit{vc}[\mathit{senderID}] = w[\mathit{senderID}]$ \;
            \For{$\mathit{elt}$ \textbf{in} $\mathit{elts}$}{\label{line:insert_loop}
                $\mathit{prior} \gets (w[\mathit{elt}.t.\mathit{senderID}] \ge \mathit{elt}.t.\mathit{clock})$ \;\label{line:prior_2}
                $\msf{execute}(f, \mathit{elt}, \mathit{prior})$
            }
            $u \gets (\mathit{senderID}, w[\mathit{senderID}])$ \;
            Append $(f, u)$ to $\mathit{buffer}$
        }
    }
    \BlankLine

\caption{List of $\mc{C}$s with our for-each operation. Blocks not shown here are the same as in \Cref{alg:list_of_crdts} ($\msf{elements}$, $\msf{delete}$, and $\msf{apply}$).}
\label{alg:for_each}
\end{algorithm}


\subsection{On Pure Operations}
\label{sec:pure}
Our restriction to pure operations is not arbitrary: we need to know what message to pass to $\mc{C}.\msf{eff}$, even for concurrent elements. Such elements did not yet exist on the initiating replica, hence the replica could not pass their states to $\mc{C}.\msf{gen}$. With pure operations, we know that the generated message is just $o$ itself, as used on line~\ref{line:effect}.

In practice, we can relax the pure restriction by passing additional metadata to $\mc{C}.\msf{eff}$. In particular, we may pass in $f$'s vector clock: line~\ref{line:effect} of $\msf{execute}$ becomes
\[
\mathit{elt}.\sigma \gets \mc{C}.\msf{eff}((\mathit{op}, w), \mathit{elt}.\sigma).
\]
This does not threaten strong eventual consistency because $w$ is consistent across replicas.

$\mc{C}$ can use the provided vector clocks to query the causal order on operations. That is sufficient to implement most CRDTs using only pure operations \cite{pure_crdts}. List CRDTs' insert operations are a notable exception.

\subsection{Correctness}
Informally, we claim that \Cref{alg:for_each} matches the semantics described in the introduction. That is, a for-each operation's $f$ is applied to exactly the causally prior and concurrent elements, minus deleted elements, regardless of message order.

We defer a precise correctness claim and proof sketch to \Cref{sec:correctness} (\Cref{thm:alg2} and \Cref{cor:alg2}).

\subsection{Other Data Structures}
For-each works equally well if we ignore the list order but still assign a unique ID $p$ to each element. That is, we can define a for-each operation on a \emph{set of CRDTs} in which each added element is assigned a unique ID.

Likewise, one can define for-each on a CRDT-valued map in which each key-value pair is assigned a unique ID when set, like Yjs's \texttt{Y.Map} shared type \cite{yjs}.

However, our construction does not work with a Riak-style map \cite{riak} in which a key's value CRDT is created on first use instead of explicitly set: two users may create the same key's value CRDT concurrently, complicating the choice of which for-each operations to apply \cite[\S 4]{semidirect}. We expect similar issues for the list of CRDTs in Kleppmann and Beresford's JSON CRDT \cite{Kleppmann2017json}, in which an element may reappear after deletion.

We leave full descriptions to future work.

\section{Examples}
\label{sec:examples}
We now describe example uses of our for-each operation in collaborative apps, at a high level. As in \Cref{sec:for_each}, we write a for-each operation as $\msf{forEach}(f)$, where
\[
f: (p, \mathit{prior}) \ra O_P \cup \{\msf{del}, \msf{null}\}
\]
is a function that takes as input a list element's position $p$ and whether it is causally prior (else concurrent), and outputs an instruction for that element: apply a (pure) operation $o \in O_P$, delete the element, or do nothing.

\subsection{Rich-Text Editor}
\label{sec:ex_rich_text}
Let us begin with a collaborative rich-text editor, as described in the introduction. To recap Examples~\ref{ex:rich_text_list} and \ref{ex:rich_text_format}, we can represent a rich-text document as a list of \emph{rich character CRDTs} $(\mathit{char}, \mathit{attrs})$, where $\mathit{char}$ is an immutable character and $\mathit{attrs}$ is a map CRDT for formatting attributes. Given list positions $\mathit{start}$ and $\mathit{end}$, define:

\kwfunction({$f(p, \mathit{prior})$}){
    \If{$\mathit{start} \le p < \mathit{end}$}{
        \Return{$(\mathit{rich} \mapsto \mathit{rich}.\mathit{attrs}.\msf{set}(\texttt{"bold"}, \msf{true}))$}
    }
    \lElse{\Return $\msf{null}$}
}
\BlankLine
\noindent
Then $\msf{forEach}(f)$ bolds the range $[\mathit{start}, \mathit{end})$ with the intended behavior in \Cref{fig:rich_text_middle}: all characters in the range are bolded, including concurrently-inserted ones.

It is possible to use a closed interval $[\mathit{start}, \mathit{end}']$ instead of the half-open interval $[\mathit{start}, \mathit{end})$. Here $\mathit{end}'$ is the position of the last character in the original range, while $\mathit{end}$ is the last-plus-one position. The difference is that $[\mathit{start}, \mathit{end})$ will also format concurrently-inserted characters at the end of the range, while $[\mathit{start}, \mathit{end}']$ will not. The latter behavior is typical for hyperlink formatting \cite{peritext}.

Other formatting attributes are similar. However, for deletions, one typically deletes only causally prior characters, as in \Cref{ex:rich_text_delete}. This is safer because deletions are monotonic (permanent), making unintended deletions harder to undo.

Note that a literal list of rich character CRDTs is memory-inefficient, since it stores a map CRDT per character. However, one can use this theoretical model as a guide, then implement an equivalent but more efficient CRDT. For example, one can store $\mathit{attrs}$'s state explicitly only when it differs from the previous character, like in Peritext \cite{peritext}.


\subsection{Recipe Editor}
A collaborative recipe editor allows multiple users to view and edit a recipe for a meal. Let us consider in particular the list of ingredients. We can model it as a list of \emph{ingredient CRDTs}, where each ingredient CRDT has sub-CRDTs for its name and amount.


Suppose we add a ``scale recipe'' button that multiplies every amount by a given value. If one user scales the recipe, while concurrently, another user inserts a new ingredient, then it is important that the new ingredient's amount is also scaled. Otherwise, it will be out of proportion with the other ingredients.

To implement such a ``scale recipe'' operation, let $s$ be the scaling amount. Define $f$ by:

\kwfunction({$f(p, \mathit{prior})$}){
    \Return $(\mathit{ingredient} \ra \mathit{ingredient}.\mathit{amount}.\msf{mult}(s))$
}
\BlankLine
\noindent
Then $\msf{forEach}(f)$ scales every ingredient's amount, including ingredients inserted concurrently.\footnote{Here we assume an operation $\msf{mult}(s)$ on the ``ingredient amount'' CRDT. This is nontrivial if you also allow $\msf{set}(\mathit{value})$ operations, but it can be implemented using another list with for-each operations; we omit the details.
}


\subsection{Slideshow Editor}
A slideshow editor is another collaborative app that can use for-each operations. A single slide might contain multiple images, shapes, or text boxes. These objects can be edited individually or together. For example, a user might translate (shift) a single object while another simultaneously rotates all objects on the slide. 

To implement these translations and rotations, each translation on an object can be represented as a translation vector. Then the object's position is represented by a list CRDT $\mc{L}$ containing all translations made so far; the actual position is the sum of all the vectors in that list.

The entire slide can be represented as a list CRDT $\mc{L'}$, where each element is a list CRDT $\mc{L}_p$ of an object $p$'s translation vectors.\footnote{Since the objects on a slide are unordered, we use their positions $p$ merely as IDs, ignoring their total order.}

A user shifts an object $p$ by appending a translation vector to its list $\mc{L}_p$, and rotates an object $p$ by multiplying corresponding translation vectors in $\mc{L}_p$ by a rotation matrix. When a group of objects are edited, the same operation should be applied to each object's list. 


\begin{figure}
    \centering
    \includegraphics[width=0.47\textwidth]{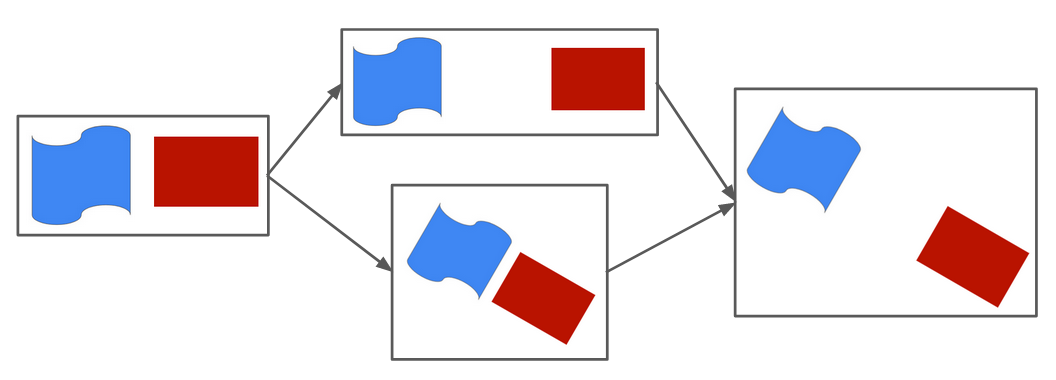}
    \caption{Typical user intention for a translation operation (top) concurrent to a group rotation (bottom): the rectangle is translated within the rotated group.}
    \label{fig:shapes}
\end{figure}

For example, say a user rotates a group of objects 30 degrees clockwise, like the bottom operation of \Cref{fig:shapes}. We want to rotate each object in the group. To keep objects aligned within the group, we also want to rotate any concurrent translation of those objects, such as the top translation of \Cref{fig:shapes}.

To implement this, let $\mathit{updatedObjects}$ be a set of the positions $p$ of the selected objects. Define $g$ and $f$ as:

\kwfunction({$g(q, \mathit{prior})$}){
    \Return{$\left(\mathit{vector} \mapsto \mathit{vector}.\msf{mult}\left( \begin{bmatrix} \cos 30 & \sin 30\\ -\sin 30 & \cos 30 \end{bmatrix}\right) \right)$}
}
\BlankLine
\noindent

\kwfunction({$f(p, \mathit{prior})$}){
    \If{$p \in \mathit{updatedObjects}$}{
        \Return{$(\mathit{object} \mapsto \mathit{object}.\mathit{forEach}(g))$}
    }
    \lElse{\Return $\msf{null}$}
}
\BlankLine
\noindent
Then $\msf{forEach}(f)$ rotates all $\mathit{updatedObjects}$ 30 degrees clockwise. 
The degree of rotation can also be stored to render the rotated object correctly.

\section{Related Work}
Dataflow programming and stream processing both perform operations ``for each'' element of a stream. Unlike this work, they typically apply a for-each operation to all regions in \Cref{fig:regions}, including the causal future. In particular, FlowPools \cite{flow_pools} allow issuing a for-each operation after a FlowPool (stream) begins; they apply the operation to all existing elements immediately, then store it as a callback for concurrent or future elements, similar to our algorithm (\Cref{alg:for_each}).

Operational Transformation \cite{ot_ressel} allows every operation on a collaborative app to perform a transformation for each concurrent operation. In contrast, we allow for-each operations to transform list elements (equivalently, insert operations) but not each other. Thus we do not need complicated algebraic rules to ensure eventual consistency.

The semidirect product of CRDTs \cite{semidirect} combines the operations of two CRDTs, $\mc{C}_1$ and $\mc{C}_2$, in a single CRDT. It essentially implements the rule: to apply a $\mc{C}_2$ operation, ``act on'' each prior and concurrent $\mc{C}_1$ operation in some way, then reduce over those $\mc{C}_1$ operations to get the current state. However, instead of storing the literal list of $\mc{C}_1$ operations, it only stores their reduced form (the actual state). Thus one can view the semidirect product as an optimized but less intuitive version of our list with for-each operations.

\section{Conclusions}
We formalized the list of CRDTs and described a novel for-each operation on this list. The resulting CRDT models a list of mutable values in a collaborative app, equipped with the operation: for each element of the list, including ones inserted concurrently, apply some operation to that element. We gave several examples in which our for-each operation matches user intention better than a literal for-each loop.

For future work, we plan to implement our for-each operation in the Collabs CRDT library \cite{collabs}.

\begin{acks}
We thank James Riely for insightful questions about a previous paper \cite{semidirect} that inspired this work. We also thank the anonymous PaPoC reviewers for helpful feedback. Matthew Weidner was supported by an NDSEG Fellowship sponsored by the US Office of Naval Research. Benito Geordie was supported by an REU sponsored by the US National Science Foundation.

\end{acks}

\bibliographystyle{ACM-Reference-Format}
\bibliography{references}

\appendix

\section{Correctness Proofs}
\label{sec:correctness}
\begin{theorem}
\label{thm:alg1}
The list of $\mc{C}$s (\Cref{alg:list_of_crdts}) satisfies strong eventual consistency.
\end{theorem}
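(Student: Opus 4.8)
The plan is to invoke the standard criterion for operation-based CRDTs (Shapiro et~al.\ \cite{Shapiro:2011}): under causal delivery with exactly-once receipt, strong eventual consistency follows once we verify that (i) every effector terminates, and (ii) the effectors of any two concurrent operations commute. Termination is immediate, since each effector performs a finite lookup in $\mathit{elts}$ followed by a single insertion, deletion, or call to $\mc{C}.\msf{eff}$. So the whole argument reduces to commutativity, and because causal delivery guarantees that the only reorderings a replica can observe are between concurrent messages, it suffices to consider concurrent pairs.

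First I would establish a causality lemma: any $\msf{delete}$ or $\msf{apply}$ message tagged with position $p$ is causally preceded by the $\msf{insert}$ that created $p$. This holds because the corresponding generator reads an existing element of $\mathit{elts}$ (the $i$-th element), so the generating replica must already have delivered that insert; the generate/effect pair then happens causally afterward. Consequently, $\msf{delete}(p)$ and $\msf{apply}(p, m)$ can never be concurrent with the $\msf{insert}$ that produced the same $p$. Since $\mc{L}$'s positions are unique, any concurrent $\msf{insert}$ therefore carries a position distinct from the target of any concurrent $\msf{delete}$ or $\msf{apply}$.

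With this lemma in hand, I would run through the unordered pairs of operation types. Two inserts commute automatically: distinct positions dropped into a list kept sorted by $p$ yield the same list regardless of order. Every cross-type pair involving an insert commutes by the causality lemma, as the insert touches a position disjoint from the other operation's target, so their effects act on separate elements. The remaining same-position cases are the substance of the proof: $\msf{delete}/\msf{delete}$ commutes because the $\msf{null}$ check makes deletion idempotent; $\msf{delete}/\msf{apply}$ commutes because ``the delete wins''---whether the apply runs before the delete (modifying, then removing, the element) or after it (finding $\msf{null}$ and doing nothing), the element ends up deleted; and $\msf{apply}/\msf{apply}$ on a shared position reduces to commutativity of $\mc{C}.\msf{eff}$ on the two messages.

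I expect this last case to be the main obstacle, since it is where correctness is inherited rather than proved directly. The key observation is that if $\msf{apply}(p, m_1)$ and $\msf{apply}(p, m_2)$ are concurrent in the list of $\mc{C}$s, then $m_1$ and $m_2$ are messages for concurrent $\mc{C}$ operations, and the outer causal delivery induces causal delivery of the inner $\mc{C}$ messages to each element's state; invoking the assumed strong eventual consistency of $\mc{C}$ then gives $\mc{C}.\msf{eff}(m_1, \mc{C}.\msf{eff}(m_2, \sigma)) = \mc{C}.\msf{eff}(m_2, \mc{C}.\msf{eff}(m_1, \sigma))$. Care is needed to confirm that concurrency lifts correctly between the two layers, after which convergence---and hence strong eventual consistency---follows from the completed commutativity checklist.
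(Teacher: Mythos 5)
Your proposal is correct and follows essentially the same route as the paper's proof sketch: reduce strong eventual consistency to commutativity of concurrent effectors, dispose of all pairs involving distinct positions (your causality lemma is just a more explicit version of the paper's observation that an insert always uses a fresh position), and handle the same-position cases by idempotence of delete, ``delete wins'' for delete/apply, and the assumed CRDT property of $\mc{C}$ for apply/apply. The only difference is that you spell out the lifting of concurrency and causal delivery from the outer list to the inner $\mc{C}$ instances, which the paper leaves implicit.
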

\begin{proof}[Proof sketch]
It suffices to prove that concurrent operations commute \cite[Proposition 2.2]{crdt_survey_2011}.

Any operations that reference different positions trivially commute. In particular, insert operations commute with all operation types, since an insert operation always uses a new position. Delete operations on the same position trivially commute. Apply operations on the same position commute because $\mc{C}$ is a CRDT. Finally, a delete and apply operation on the same position commute because in either case, the element ends up deleted.
\end{proof}

\begin{theorem}[{Semantics of \Cref{alg:for_each}}]
\label{thm:alg2}
Fix an \Cref{alg:for_each} replica and a point in time. Let $\mathit{elt}$ be an element whose insert message has been received by the replica. If $\mathit{elt}$ is present in $\mathit{elts}$, then $\mathit{elt}.\sigma$ is the result of effecting the following messages on its initial state, exactly once and in causal order:
\begin{enumerate}[(a)]
    \item All messages due to apply operations on $\mathit{elt}$ that have been received.
    \item All (pure) operations $f(\mathit{elt}.p, \msf{true})$, where $\msf{forEach}(f)$ has been received and is causally future to $\mathit{elt}$'s insert operation.
    \item All (pure) operations $f(\mathit{elt}.p, \msf{false})$, where $\msf{forEach}(f)$ has been received and is concurrent to $\mathit{elt}$'s insert operation.
\end{enumerate}
The element is deleted (no longer present in $\mathit{elts}$) if and only if the replica has received a delete message for $\mathit{elt}$ or one of the above $f$ calls returned $\msf{del}$.
\end{theorem}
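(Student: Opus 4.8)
The plan is to track, over the entire execution on the fixed replica, exactly which calls to $\mc{C}.\msf{eff}$ are made on $\mathit{elt}.\sigma$ and in what order, and then to invoke the fact that $\mc{C}$ is an operation-based CRDT to identify the resulting state. I would first record three structural facts that the rest of the argument leans on: messages are delivered in causal order; an apply operation on $\mathit{elt}$ is always causally future to $\mathit{elt}$'s insert, since its generator reads the element and so the generating replica has already received the insert; and positions are unique and deletions permanent, so an element present now has been present continuously since its insert was effected. The last fact ensures no received apply or applicable for-each was silently dropped by the delete-wins rule, which is what makes category~(a) the full set of received applies.

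The core is a case analysis on each message that could touch $\mathit{elt}.\sigma$, classified by type and by arrival order relative to $\mathit{elt}$'s insert. For apply messages, each is effected exactly once when received, which is after the insert by causal delivery, matching~(a). For $\msf{forEach}$ messages I would verify the vector-clock tests. In the for-each effector's loop (line~\ref{line:insert_loop}), causal delivery guarantees that every element then present has an insert that is prior or concurrent to the for-each (a causally-future insert has not yet arrived); hence the test on line~\ref{line:prior_2} sets $\mathit{prior}=\msf{true}$ exactly for causally-future for-eaches (category~(b)) and $\mathit{prior}=\msf{false}$ exactly for concurrent ones (category~(c), when the for-each arrives after the insert). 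In the insert effector's buffer loop (line~\ref{line:buffer_loop}), a buffered for-each cannot be causally future to the incoming insert, so the test $v[u.\mathit{senderID}] < u.\mathit{clock}$ fires exactly for concurrent for-eaches (category~(c), when the for-each arrives before the insert) and is skipped for for-eaches causally prior to the insert (the ``future'' element region, correctly omitted from (a)--(c)). I would then argue exactly-once by noting that a concurrent for-each is effected through the for-each loop or through the buffer loop but never both, the two sites being separated by whether the for-each arrives after or before the insert.

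The order claim needs the most care. I would assign to each effected operation an \emph{application time}: the processing step at which its $\mc{C}.\msf{eff}$ call occurs, which is the operation's own processing step except for a concurrent for-each that arrived before the insert, whose call is deferred to the insert's processing step (ties broken by buffer position). Given effected operations $X \prec Y$ in causal order, I want $X$'s application time to precede $Y$'s. Since messages are processed in causal order, $X$ is processed before $Y$; the only way deferral could invert this is if $X$ is deferred to the insert while $Y$ is not and $Y$ precedes the insert---but $Y$ causally following $X$ and preceding the insert would force $X$ to also precede the insert, a contradiction, so the bad case cannot occur. When both are deferred, the buffer is appended in receipt (hence causal) order, so $X$ precedes $Y$ there as well. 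With the effected set equal to (a)$\cup$(b)$\cup$(c), each once and in causal order, the operation-based CRDT property of $\mc{C}$ yields the claimed $\mathit{elt}.\sigma$.

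Finally, for the deletion characterization I would observe that the only two code paths that remove an element are the $\msf{delete}$ effector and $\msf{execute}$ when $f$ returns $\msf{del}$ (reached from both loops); this gives the ``only if'' direction, while the ``if'' direction is immediate, since a received delete or a $\msf{del}$-returning $f$ call executes one of these paths, after which the element stays absent by permanence of deletion. The main obstacle I anticipate is precisely the order argument for deferred concurrent for-eaches: making rigorous that deferring such an effect to insert time never reorders it before a causally-earlier operation, which is where the causal-delivery assumption does the essential work.
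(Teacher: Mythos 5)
Your proposal is correct and follows essentially the same route as the paper's proof sketch: a case split on message type, with a concurrent for-each effected either at its own receipt (element already present) or deferred to the buffer loop at insert time, plus the observations that buffer order and the placement of deferred effects respect the causal order. One small quibble: in your order argument the stated contradiction (``$X$ would also precede the insert'') is not itself contradictory---the clean reason the bad case cannot occur is that a non-deferred operation affecting $\mathit{elt}$ is necessarily processed after the insert (applies and causally-future for-eaches by causal delivery, concurrent for-eaches arriving after the insert by definition)---but your setup already contains this fact, and overall your plan is more detailed than the paper's own sketch.
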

\begin{proof}[Proof sketch]
Type (a) messages are effected by apply's effector.

Type (b) operations are effected by for-each's effector when $\msf{forEach}(f)$ is received. Note that line~\ref{line:prior_2} correctly sets $\mathit{prior}$ to $\msf{true}$ by properties of vector clocks.

For type (c) operations, there are two cases. If $\mathit{elt}$'s insert message was received before the for-each message, then it is similar to the previous paragraph.

Otherwise, the operation is effected when the insert effector loops over $\mathit{buffer}$. Note that it correctly uses $\mathit{prior} = \msf{false}$. Since $\mathit{buffer}$ is in order by receipt time, the loop effects this operation in causal order relative to other type (c) messages. Also, since $\mathit{elt}$ is newly inserted, the operation is effected prior to all type (a) and (b) messages; this respects the causal order because all such messages are concurrent or causally future.

The claim about deletions is similar.
\end{proof}

\begin{corollary}
\label{cor:alg2}
\Cref{alg:for_each} satisfies strong eventual consistency.
\end{corollary}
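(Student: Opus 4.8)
The plan is to derive strong eventual consistency (SEC) directly from the semantic characterization in \Cref{thm:alg2}, using the same criterion invoked for \Cref{thm:alg1}: an operation-based CRDT with causal delivery satisfies SEC when any two replicas that have delivered the same set of messages reach equivalent observable states \cite[Proposition 2.2]{crdt_survey_2011}. Here the observable state is the query result $\mathit{elts}$, i.e., the set of present elements together with their positions, insertion times, and values $\sigma$. So it suffices to show that this observable state is a function of the delivered message set alone, independent of receipt order.

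First I would fix two replicas $R_1$ and $R_2$ that have received exactly the same set $M$ of messages, and compare their copies of $\mathit{elts}$ element by element. An element's insert message lies in $M$ or not, identically for both replicas, so the two agree on which elements have been ``seen.'' For each such element $\mathit{elt}$, \Cref{thm:alg2} characterizes both its presence and (if present) its value purely in terms of which messages in $M$ are causally prior to, concurrent with, or future to $\mathit{elt}$'s insert---information that is the same on $R_1$ and $R_2$, since the causal order is a global relation on $M$, not a per-replica artifact. In particular, $\mathit{elt}$ is deleted on $R_1$ iff a delete message for it is in $M$ or one of the type-(b)/(c) $f$-calls returns $\msf{del}$, and likewise on $R_2$; these conditions coincide, so the two replicas agree on exactly which elements are present.

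It remains to show that, for a present element, $\sigma$ is the same on both replicas. \Cref{thm:alg2} says $\sigma$ is obtained by effecting the same set of messages---the received apply messages (a), together with the induced pure operations $f(\mathit{elt}.p, \msf{true})$ (b) and $f(\mathit{elt}.p, \msf{false})$ (c)---``in causal order.'' The one remaining gap is that causal order is only partial, so each replica effects these messages in some linear extension that may depend on receipt order. Here I would appeal to the fact that $\mc{C}$ is itself a CRDT: its concurrent effectors commute, and the type-(b)/(c) operations are pure, hence bona fide $\mc{C}$ messages, so the same commutativity applies to them. Since any two linear extensions of a partial order are connected by transpositions of adjacent incomparable (concurrent) elements, $\mc{C}$'s own SEC implies every linear extension yields the same $\sigma$. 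Thus $R_1$ and $R_2$ compute identical values, and their observable states coincide.

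I expect the commutativity step to be the main obstacle, for two reasons. First, it is where the proof genuinely leans on $\mc{C}$ being a CRDT rather than an arbitrary data type, and one must be careful that the induced $f$-operations inherit $\mc{C}$'s causal structure correctly, so that ``concurrent'' for the list of $\mc{C}$s matches ``concurrent'' for $\mc{C}$. Second, one must confirm that the order-dependent pieces of internal state that \Cref{thm:alg2} does not directly pin down---notably $\mathit{buffer}$, which is appended in receipt order, and $\mathit{vc}$---do not leak into the observable state: $\mathit{vc}$ is determined set-wise as the per-replica message counts, and $\mathit{buffer}$'s ordering affects only the order in which future type-(c) operations are applied, which by the same commutativity argument cannot change any eventual $\sigma$. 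Once these points are dispatched, SEC follows immediately.
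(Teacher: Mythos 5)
Your proposal is correct and takes essentially the same route as the paper, whose entire proof is the one-liner ``by the theorem's description and the fact that $\mc{C}$ is a CRDT'': you simply spell out the two ingredients, namely that \Cref{thm:alg2} determines each element's presence and effected message set purely from the delivered messages and the (global) causal order, and that $\mc{C}$'s commutativity of concurrent effectors makes the choice of linear extension irrelevant. The extra care you take about $\mathit{buffer}$ and $\mathit{vc}$ not leaking into the observable state is a reasonable elaboration but not a different argument.
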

\begin{proof}
By the theorem's description and the fact that $\mc{C}$ is a CRDT.
\end{proof}

\balance

\end{document}